\newtheorem{definition}{\textbf{Definition}}
\newtheorem{proposition}{\textbf{Proposition}}
\newtheorem{remark}{\textbf{Remark}}
\renewenvironment{proof}[1][\proofname]{\par
  \pushQED{\qed}\normalfont\topsep6\p@\@plus6\p@\relax
  \trivlist\item[\hskip\labelsep
    \itshape
    #1\@addpunct{:}]\ignorespaces
}{%
  \popQED\endtrivlist\@endpefalse
}
\acrodef{kb}[KB]{knowledge base} 
\acrodef{semcom}[SemCom]{semantic communication}
\acrodef{-d}[-D]{-dimensional}
\acrodef{ml}[ML]{machine learning}
\acrodef{snr}[SNR]{signal-to-noise ratio}
\acrodef{scs}[S-category space]{semantic-based category space}
\begin{document}
	\title{A Mathematical Framework of Semantic Communication based on Category Theory}
   \author{

	 Shuheng Hua,~\IEEEmembership{Graduate Student~Member,~IEEE},
     Yao Sun,~\IEEEmembership{Senior~Member,~IEEE},
      Kairong Ma,~\IEEEmembership{Graduate Student~Member,~IEEE},
       Dusit Niyato,~\IEEEmembership{Fellow,~IEEE},
     and Muhammad Ali Imran,~\IEEEmembership{Fellow,~IEEE}

	\thanks{
	
	Shuheng Hua, Kairong Ma, Muhammad Ali Imran, Yao Sun (\textit{Corresponding author: Yao Sun}.) are with the James Watt School of Engineering, University of Glasgow, Glasgow G12 8QQ, UK (e-mail: s.hua.1@research.gla.ac.uk; \{Yao.Sun, Muhammad.Imran\}@glasgow.ac.uk; k.ma.2@research.gla.ac.uk.)

    Dusit Niyato is with the College of Computing and Data Science, Nanyang Technological University, Singapore 639798 (e-mail: dniyato@ntu.edu.sg).
  }	}
  
	\maketitle 
	\begin{abstract}
While semantic communication (SemCom) has recently demonstrated great potential to enhance transmission efficiency and reliability by leveraging machine learning (ML) and knowledge base (KB), there is a lack of mathematical modeling to rigorously characterize SemCom system and quantify the performance gain obtained from ML and KB. 
In this paper, we develop a mathematical framework for SemCom based on category theory, rigorously modeling the concepts of semantic entities and semantic probability space. 
Within this framework, we introduce the semantic entropy to quantify the uncertainty of semantic entities. 
We theoretically prove that semantic entropy can be effectively reduced by exploiting KBs, which capture semantic dependencies. 
Within the formulated semantic space, semantic entities can be combined according to the required semantic ambiguity, and the combined entites can be encoded based on semantic dependencies obtained from KB.
Then, we derive semantic channel capacity modeling, which incorporates the mutual information obtained in KB to accurately measure the transmission efficiency of SemCom.
Numerical simulations validate the effectiveness of the proposed framework, showing that SemCom with KB integration outperforms traditional communication in both entropy reduction and coding efficiency.

	\end{abstract}
	
	\begin{IEEEkeywords}
		Semantic communication, knowledge base, semantic entropy, semantic channel capacity, category theory, information theory.
	\end{IEEEkeywords}

	
	\section{Introduction}

\Ac{semcom} has emerged as a transformative paradigm to enhance data transmission efficiency and robustness based on advanced \ac{ml} models \cite{15}. Unlike the traditional communication framework built on Shannon's information theory \cite{1}, which focuses mainly on the accuracy of bits delivery, \ac{semcom} turns the focus from bits to semantics, emphasizing the interpretation and understanding of the content being communicated \cite{3}. This paradigm considers not only the raw data, but also the context, relevance and meaning of the data, aiming to reduce the number of bits required for transmission \cite{12}. 

\subsection{Background and Related Works}
Generally, based on the task requirements, the sender employs a deep learning (DL)-based encoder to extract key features from the original data and map them into a high-dimensional vector representation \cite{4}. These vectors capture the underlying relationships between the data elements, while integrating the \ac{kb} to enhance context and filter out irrelevant information \cite{16}. The decoder at the receiver side performs semantic inference and multi-modal reconstruction based on the vector distribution, leveraging the local \ac{kb} to further reconstruction \cite{17}.

Recently, most of works \cite{22,23,24,25,26,29,30,31,32,33} focus on the use of \ac{ml} models to enhance semantic extraction and reconstruction capabilities. Compared with traditional communication paradigms, these data-driven approaches show greater adaptability in dealing with complex semantic relationships. Meanwhile,more and more works are trying to integrate \ac{kb} into \ac{semcom} to enhance semantic reasoning. The work of \cite{10} propose a knowledge graph-based \ac{semcom} system that uses a deep learning (DL) model to convert transmitted sentences into triples in order to improve the accuracy of semantic representation.
The authors of \cite{8} propose a generative \ac{semcom} architecture based on semantic \ac{kb}, which achieves low-dimensional semantic characterization by constructing three sub-\ac{kb}s, namely, source, task and channel, to significantly improve the efficiency of 6G communication.
Besides, the authors of \cite{11} proposed a \ac{semcom} system empowered by DL-based on a shared \ac{kb}, aiming to reduce the amount of transmitted symbols by combining text messages and corresponding knowledge while maintaining semantic performance. In \cite{18}, the authors present an contrastive representations learning based \ac{semcom} framework (CRLSC), which constructs a shared \ac{kb} through a large pre-trained model on the server side, which is utilized by the end devices for training, facilitating knowledge transfer in large-scale heterogeneous networks. 
Apart from these, the authors of \cite{19} proposed a knowledge-enhanced \ac{semcom} reception framework that enables the receiver to use the \ac{kb} for semantic reasoning and decoding, improving the reliability and performance of the system. 

While the above works able to effectively exploit \ac{ml} and \ac{kb} to enhance \ac{semcom} system performance, there is still a lack of fundamental mathematical framework to characterize the \ac{semcom} system and provide theoretically guidance to optimally design the system. 
Such a framework should be essential for understanding and underlying the design principles of \ac{semcom}, enabling rigorous performance analysis and evaluation. There are only a few works on establishing mathematical framework for \ac{semcom}. 
A definition of semantic entropy based on logical probability in \cite{20} has been proposed in a first attempt to quantify the semantic information in sentences. 
The authors of \cite{27} extended the semantic entropy model by combining propositional logic and statistical probability to introduce a semantic entropy formula for \ac{kb} association. 
In \cite{14}, the authors proposed a framework of semantic information theory (SIT), centered on constructing equivalence classes between semantics and syntax by mapping the same semantic information to multiple syntactic representations. 
The work in \cite{28} constructs a theoretical framework for \ac{semcom}, focuses on optimizing the existing semantic language (language utilization) and designing efficient communication protocols (language design), and reveals the optimization paths for meaning transfer through probabilistic models and distortion-cost analysis.

\subsection{Motivation and Challenges}
As discussed above, these theories have not explored how to characterize SemCom system and quantify the performance gain obtained from and \ac{kb}. 
Specifically, existing frameworks lack mathematical definitions of semantics and effectively link the \ac{kb} to the reduction of semantic entropy, which is an essential feature for improving communication efficiency. 
This gap presents several interconnected challenges: First, the quantification of relationships between semantic entities is a complex problem. Traditional mathematical frameworks (e.g., graph theory and probability theory) are often difficult to capture the diversity and complexity of semantic relationships \cite{34}. 
It is hard for these traditional approaches to model the multidimensional properties of semantic connections, including synonyms, antonyms and contextual relationship. 
Second, the problem of constructing and describing semantic space is also a difficult problem to be solved. As a multidimensional abstract space for the existence of semantic entities, semantic space needs to be precisely defined in terms of its dimensionality, metrics, and topological properties. Existing vector space models represent relationships in semantic space by mapping semantic entities to vectors. 
However, these models still have major limitations when facing complex semantics. Third, the deep integration of \ac{kb} and semantic entropy has not yet been systematically modeled. Specifically, how to deeply integrate the multidimensional features of \ac{kb} with the process of semantic entropy reduction, so as to support adaptive semantic coding strategies (e.g., dynamic adjustment of synonym set priority) and knowledge-driven resource allocation (e.g., bandwidth optimization based on semantic entropy) is not only of great theoretical significance, but also will provide a solid foundation for the practical application of \ac{semcom}.
 
In order to address these challenges, we need to delve into the following key problems:
\begin{itemize}
    \item \textit{Problem 1: How should semantic entities be represented?} 
    This addresses the challenge of semantic relationship quantification by seeking appropriate mathematical representations for semantic entities that capture their multidimensional relationships.
\end{itemize}
\begin{itemize}
    \item \textit{Problem 2: How to measure the performance gain from KB?} 
    This builds on both the first and third challenges, seeking metrics that can quantify how knowledge bases enhance semantic communication efficiency.
\end{itemize}
 
\begin{itemize}
    \item \textit{Problem 3: How semantic entropy quantifies the relationship between compression rate and semantic ambiguity?} 
    This directly corresponds to the KB-Semantic entropy integration challenge, exploring the mathematical framework needed to relate information compression with semantic clarity.
\end{itemize}

\subsection{Contributions and Organization}
In response to the problems outlined above, in this paper, we make use of category theory for defining and organizing relationships between semantic entities. Specifically, it allows us to model the composition and transformation of semantic structures in a way that can be easily combined with information theoretic principles. In addition, we address the compositional nature of semantics - \textit{i.e.}, how simple semantic units can be combined to form more complex meanings - by utilizing concepts such as functions and morphisms in category theory. The main contributions of this paper are summarized as follows:
\begin{itemize}
    \item We propose a mathematical framework for the semantics of communication systems using category theory. It formally defines semantic entities as tuples of attributes from multiple categories and organizes them into semantic probability space, which is multi-dimensional space representing entity relationships and attributes.
\end{itemize}

\begin{itemize}
  \item  We distinguish between conventional entropy and semantic entropy and show how semantic entropy can be reduced by combining related or synonymous attributes in conjunction with the \ac{kb}.
\end{itemize}

\begin{itemize}
    \item We explore how information gain from the \ac{kb} can be obtained using Kullback-Leibler divergence, which measures the reduction in uncertainty achieved by using the \ac{kb}. This demonstration shows that the use of \ac{kb} reduces information entropy, allowing for more efficient communication with fewer resources.
\end{itemize}

\begin{itemize}
    \item We formalize information-level entropy by defining how to compute information entropy consisting of multiple entities, each belonging to a semantic probability space. By considering the relationships between entities, we propose a KB-based wireless semantic channel capacity model for \ac{semcom}.
\end{itemize}

\begin{itemize}
    \item We verify the effectiveness of the established mathematical framework through numerical simulations. We compare our semantic coding approach (utilizing Fano coding based on spatial location and probabilistic information) with two benchmark approaches (traditional Fano coding and Fano coding with parity check). The results also verify the superiority of the performance of semantic coding with KB implementation in different situations.
\end{itemize}

The rest of this paper is organized as follows. The semantic entities and semantic probability space are defined Section \uppercase\expandafter{\romannumeral2}. Section \uppercase\expandafter{\romannumeral3} illustrates entropy reduction at entity level. In Section \uppercase\expandafter{\romannumeral4}, the entropy reduction extend to message level is discussed. Numerical results are demonstrated and discussed in Section \uppercase\expandafter{\romannumeral5}, followed by the conclusions in Section \uppercase\expandafter{\romannumeral6}. 






\section{Semantic Entity \& Semantic Probability Space}
Borrowing from linguistic science, semantics is defined as how language conveys meaning through words, phrases, and symbols and their relationship to what they represent \cite{21}. To mathematically characterize semantics, we introduce the concept of a semantic entity, which serves as the fundamental unit to constrcut semantics. A semantic entity refers to an object, concept, or thing that has meaning within a particular context. For example, the character `Tom', from the cartoon series \textit{Tom \& Jerry}, can be a semantic entity, which with precise representations: his blue color, his feline kind, and his role as a cartoon character. These representations inspire us to exploit category theory to formally define a semantic entity and semantic probability space.

\subsection{Semantic Probability Space}
We start by defining a general category space and then narrow down to specific formulations of semantic entities and semantic probability space.


\subsubsection{{Category Space}} 
Consider a discrete information source, 
and let \(\mathcal{B}\) represents the set of elements:
\begin{equation}
    \mathcal{B}=\{b_1,\ldots,b_N\}, 
\end{equation}
with probabilities \(\{ p(b_1),\ldots,p(b_N) \}\).

\begin{definition}
    A \textbf{category} \(\mathcal{C}_j\) is a partition of a set: 
    \begin{equation}
    \mathcal{C}_j: \mathcal{B}\rightarrow\mathcal{B}_j= \{B_j^0,B_j^1,B_j^2,\ldots,B_j^\mathit{M_j}\}, 
\end{equation}
where the divided subsets should fulfill 
\begin{equation} \label{excclusive}
    B_j^i\cap B_j^k = \emptyset, \forall i \ne k,
\end{equation}
and 
\begin{equation}\label{cup_all}
     B_j^0 \cup B_j^1\cup\cdots\cup B_j^{M_j}= \mathcal{B}.
\end{equation}
\end{definition}

Denote \(\mathcal{C}_j^i\) as the \(j_\mathit{th}\) sub-mapping of the partition \(\mathcal{C}_j\), that \(\mathcal{C}_j^i(\mathcal{B})=B_j^i\). 
Denote \(\circ\) as composition operator, which combines two mappings such that the output of one mapping becomes the input of the other. Therefore, we have
\begin{equation}
    (\mathcal{C}_j^{j'}\circ\mathcal{C}_k^{k'})(\mathcal{B})=\mathcal{C}_j^{j'}(\mathcal{C}_k^{k'}(\mathcal{B}))=\mathcal{C}_j^{j'}(B_k^{k'})=B_j^{j'}\cap B_k^{k'}.
\end{equation}
Then we have commutativity: 
\begin{equation} \label{commutative}
    \mathcal{C}_j^{j'}\circ\mathcal{C}_k^{k'}=\mathcal{C}_k^{k'}\circ\mathcal{C}_j^{j'}; 
\end{equation}
associativity:
\begin{equation} \label{associativity}
    \mathcal{C}_j^{j'}\circ(\mathcal{C}_k^{k'}\circ\mathcal{C}_l^{l'})= (\mathcal{C}_j^{j'}\circ\mathcal{C}_k^{k'})\circ\mathcal{C}_l^{l'}; 
\end{equation}
and distributivity:
\begin{equation} \label{distributivity}
    \mathcal{C}_j^{j'}\circ(\mathcal{C}_k^{k'}\cup\mathcal{C}_l^{l'})= (\mathcal{C}_j^{j'}\circ\mathcal{C}_k^{k'})\cup(\mathcal{C}_j^{j'}\circ\mathcal{C}_l^{l'}).
\end{equation}

Given \(M\) categories \(\mathcal{C}_j\) for a set \(\mathcal{B}\), we can obtain the discrete category space: 
\begin{equation}
    \mathcal{C}=(\mathrm{Obj}(\mathcal{C}),\mathrm{Hom}(\mathcal{C}),\circ), 
\end{equation}
where \(\mathrm{Obj}(\mathcal{C})\) represents the set of \(\mathcal{B}\) and all its subsets, \(\mathrm{Hom}(\mathcal{C})\) represents the set of all mappings.

Note that any partition of a set \(\mathcal{B}\) that satisfies (\ref{excclusive}) and (\ref{cup_all}) can be defined as a category. 
Then, partition the set \(\mathcal{B}\) based on the semantic meaning of elements, a \ac{scs} can be obtained. 

\subsubsection{S-category Space} 
Categorizing \(\mathcal{B}\) based on the semantic meaning of its elements, we can obtain a semantic category\footnote{In previous subsection, \(\mathcal{C}_j\) denoted an arbitrary partition. To reduce symbol redundancy, \(\mathcal{C}_j\) represents a semantic partition from here.}:
\begin{equation}
     \mathcal{C}_j: \mathcal{B}\rightarrow \mathcal{B}_j= \{B_j^0,B_j^{a_j^1},B_j^{a_j^2},\ldots,B_j^{a_j^{M_j}}\},
\end{equation}
where \(B_j^0\) is the set of elements which are not relevant to semantic category \(j\). 
For example, if a semantic category \(\mathcal{C}_j\) is color, then we can divide all elements based on color. 
In this case, \(B_j^0\) is the set of elements which are not relevant to color, 
\(B_j^{a_j^1}\) could be the set of elements which are relevant to red, 
\(B_j^{a_j^2}\) is the set of elements which are relevant to blue. 
Therefore, we obtain a semantic partition based on a color category: 
\begin{equation}
    \mathcal{C}_{\mathrm{color}}: \mathcal{B}\rightarrow \mathcal{B}_{\mathrm{color}}= \{B_{\mathrm{color}}^0,B_{\mathrm{color}}^{\mathrm{red}},B_{\mathrm{color}}^{\mathrm{blue}},\ldots,B_{\mathrm{color}}^{\mathrm{green}}\}. 
\end{equation}
Furthermore, depending on user needs, the category \(\mathcal{C}_j\) can be extended to more detailed categories, i.e., \(\mathcal{C}_j\rightarrow \mathcal{C}_j^{(k)}\). 
For example, `color' can be extended to `RGB': 
\begin{equation}
    \mathcal{C}_\mathrm{color}\rightarrow \mathcal{C}_R\times \mathcal{C}_G\times \mathcal{C}_B.
\end{equation}
Then, the following step is to specify how many elements are required for a given semantic category. 
For example, how many colors can we define? 
We could consider green, red, blue as subsets within a color partition, and we can also include white, black, etc. 
Therefore, let us introduce the concept of ``attribute" now. 

\begin{figure*}[htbp]
\centering
    \includegraphics[scale=0.55]{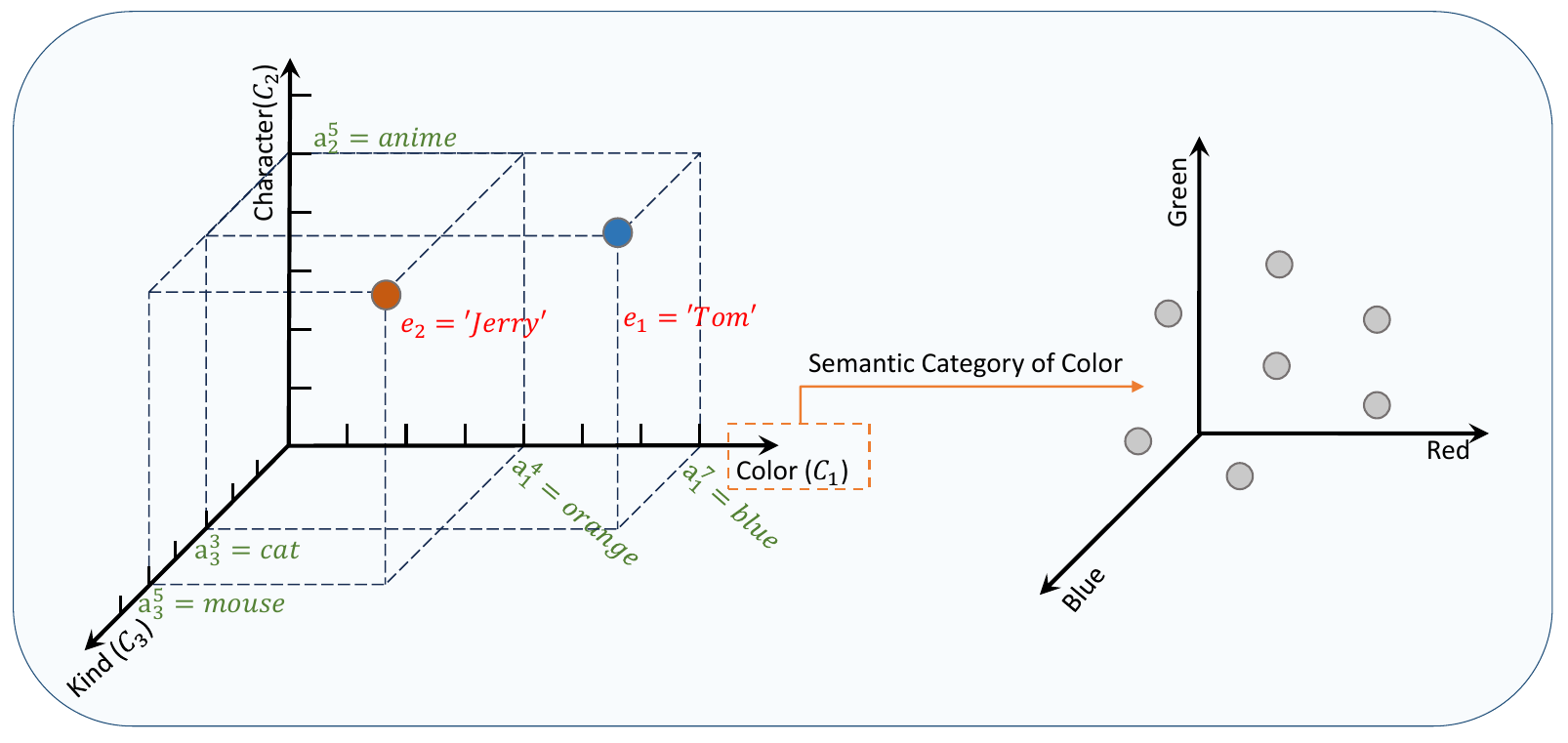}
    \caption{The location of entity in \ac{scs}}
    \label{semantic_space}
\end{figure*}

\begin{definition} \label{def_attribute}
    For a semantic category \(\mathcal{C}_j: \mathcal{B}\rightarrow \mathcal{B}_j= \{B_j^0,B_j^{a_j^1},B_j^{a_j^2},\ldots,B_j^{a_j^{M_j}}\}\), an \textbf{attribute} of \(\mathcal{C}_j\) is defined as 
    \begin{equation}
        a_j^i: \{a_j^i|B_j^{a_j^i}\in\mathcal{B}_j\}.
    \end{equation}
\end{definition}
Attributes are properties or characteristics that provide specific details, for a partitioned subset. 
For example, `blue' could be an attribute of `color' category, and the subset \(B_{\mathrm{color}}^{\mathrm{blue}}\) with the superscript `blue' contains all elements relevant to `blue' in \(\mathcal{B}\).
Condition (\ref{excclusive}) ensures that the attributes of a category are exclusive, which avoid misunderstanding of semantics. 
{Let} \(A_j\) represent the set of attributes of category \(\mathcal{C}_j\): 
    \begin{equation}
        A_j=\{a_j^1,a_j^2,\ldots,a_j^{M_j}\}, 
    \end{equation}
where \(M_j\) denotes the number of attributes in \(A_j\).

We can combine multiple semantic categories for a given discrete information source. 
For example, for given names of cartoon characters, they can be categorized in color category and kind category, as shown in Fig. \ref{semantic_space}. We can see that there are three semantic categories, color, character, and kind. Each of these categories represents an independent dimension in the \ac{scs}, allowing entities (such as `Tom' or `Jerry') to be precisely located based on their attributes.
Therefore, given \(M\) semantic categories, we can obtain an \(M\)-dimensional (\(M\)-D) \ac{scs}: 
\begin{equation}
    \mathcal{C}_s^{(M)}=(\mathrm{Obj}(\mathcal{C}_s^{(M)}),\mathrm{Hom}(\mathcal{C}_s^{(M)}),\circ).
\end{equation}

Now, based on the concept of \ac{scs}, let us give the formal definition of a semantic entity.
\begin{definition}\label{def_entity}
    In an \(M\)-\(\mathrm{D}\) \ac{scs}, an \(m\)-\(\mathrm{D}\) \textbf{semantic entity} is defined as 
    \begin{equation}
        e_i^{(m)}=(a_1^{i_1},a_2^{i_2},\ldots,a_m^{i_m}), m\leq M.
    \end{equation}
\end{definition}
Note that if \(m=M\), the entity is a point in the \ac{scs}; 
if \(m<M\), the entity is an \((M-m)\)-D geometry orthogonal to the axes. 
For example, in the 3-D \ac{scs} shown as Fig. \ref{semantic_space}, the 1-D entity `cat' is a 2-D plane determined by \((C_3=\mathrm{cat})\); 
the 2-D entity `blue cat' is a line determined by \((C_1=\mathrm{blue},C_3=\mathrm{cat})\); the 3-D entity `Tom' is a point. 
These low-dimensional entities \(e_i^{(m)},\forall m<M\) will be projected to a point in one of the \(m\)-D {projection subspaces}. 
{In this work, we use orthogonal projection point to represent the entities} \(e_i^{(m)},\forall m<M\), \textit{i.e.}, setting the attribute values to 0 for the irrelevant categories. 
For example, the entity \(e_i^{(2)}=(\mathrm{blue},0,\mathrm{cat})\) represents `blue cat' with an unspecified character, in the aforementioned  \(3\)-D \ac{scs}. 
From the Definition \ref{def_attribute} and \ref{def_entity}, {attribute is the minimum semantic unit (sut) of semantics} in a \ac{scs}. 
{Then, the amount of semantics (in the unit of suts) of an entity} \(e_i^{(m)}\) can be calculated as:
\begin{equation}
    q(e_i^{(m)})=||e_i^{(m)}||_0
\end{equation}



\subsubsection{Semantic Probability Space} 
After establishing a \ac{scs}, we now should answer other questions, i.e,. how to measure the difference for two semantic entities, {and how to calculate the semantic entropy.} 
Therefore, for convenience in calculation, we introduce an Euclidean semantic space \(\Omega^{(M)}= \mathbb{R}^M\), {with the probability information of source}. 

We decompose the measurement of semantic entities distance into the measurement of attributes distance. 
Let \(d_{C_j}(\cdot)\) denotes the semantic distance function of attributes of \(\mathcal{C}_j\): 
\begin{equation}
    \begin{aligned}
    d_{\mathcal{C}_j}(a_j^i,a_j^k)
    \left\{
    \begin{array}{ll}
      > 0, &\mathrm{if}\quad i\ne k, \\
      = 0, &\mathrm{if}\quad i=k.
    \end{array}
    \right.
    \end{aligned}
\end{equation}
The attributes in \(A_j\) can be mapped to \(j_{\mathrm{th}}\) axis in \(\Omega^{(M)}\) with the mapping function \(f_{C_j}\): 
\begin{equation}
    f_{C_j}: a_j^i \rightarrow x_j^i, \forall a_j^i\in A_j, x_j^i \in \mathbb{R}.
\end{equation}
Let \(a_j^i\) and \(a_j^{k}\) be two attributes in \(A_j\), we define they are \(\epsilon\)-similar, if 
\begin{equation}\label{similar}
    0<|f_{C_j}(a_j^i)-f_{C_j}(a_j^{k})|_{}= d_{C_j}(a_j^i,a_j^{k})\leq \epsilon,
\end{equation}
where \(\epsilon>0\) is a similarity threshold. 
\\If \(\exists a_j^{k'} \in A_j\) is the antonym of \(a_j^k\in A_j\), 
then we have: 
\begin{equation}\label{antonym}
    f_{C_j}(a_j^k)=-f_{C_j}(a_j^{k'}). 
\end{equation}

In this content, the semantic entity can be represented with real coordinate in \(\Omega^{(M)}\):
\begin{equation}
\begin{aligned}
         f_{\Omega}(\cdot): e_i^{(M)} \rightarrow (x_{1}^{i_1},x_{2}^{i_2},\ldots,x_{M}^{i_M}).
\end{aligned}
\end{equation}

Thus, the difference for two semantic entities could be the distance in the Euclidean space\footnote{In this work, we assume the different categories have the same importance. If the importance of categories is considered vary, importance weights need to be applied accordingly.}: 
\begin{equation}
    d_{\Omega}(e_i^{(M)},e_j^{(M)})=|| f_{\Omega}(e_i^{(M)})- f_{\Omega}(e_j^{(M)})||_p, 
\end{equation}
the \(p\)-norm distance can be considered as \(1\)-norm Manhattan distance, \(2\)-norm Euclidean distance, or other norms, as per the practical requirement. 



\begin{figure*}
\centering
    \includegraphics[scale=0.59]{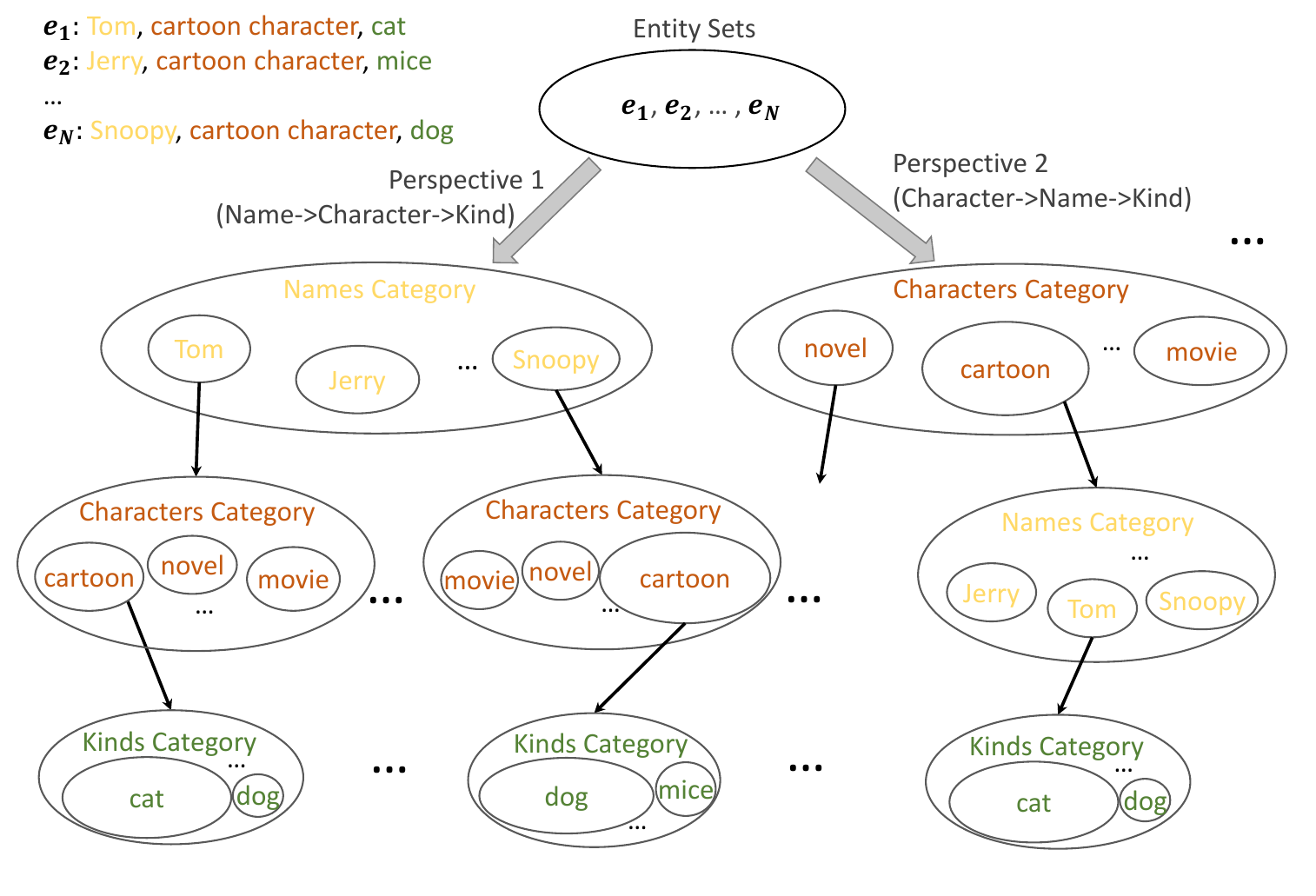}
    \caption{Different perspectives for locating an entity}
    \label{perspectives}
\end{figure*}

Based on these mapping, we can exactly know the position of an entity in the Euclidean semantic space. 
By considering probability information of entities in the space, the semantic probability space \((\Omega^{(M)},E,P)\) can be established, 
where \(E\) is the set of entities, and \(P\) is a probability function, which reflects the probability of each entity in the space. 
The measure of the entire sample space is equal to one, \textit{i.e.}, \(P(\Omega)=1\).
Then, given the attributes of \(m<M\) categories, we can determine an \(\mathit{(M-m)}\)-D subspace \(\Omega^{(M-m)}\). 
The probability of entities in this \(\Omega^{(M-m)}\) should be calculated based on conditional probability, i.e., 
\begin{equation}
    P(e^{(M-m)}|\Omega^{(M-m)}) =\frac{P(e^{(M)}|\Omega^{(M)})}{P(\Omega^{(M-m)}|\Omega^{(M)})}.
\end{equation}
Thus, the measure of the new sample space keeps equal to one:  
\begin{equation} \label{sub-space}
    P(\Omega^{(M-m)})=1.
\end{equation}

In the semantic probability space, there are many different perspectives (i.e., the orders of categories) to locate an entity, as shown in Fig. \ref{perspectives}. 
Taking the example of `Tom', there are different perspectives shown in Fig. \ref{perspectives}: one is \((\text{Name} \rightarrow \text{Character} \rightarrow \text{Kind})\), another one is \((\text{Character} \rightarrow \text{Name} \rightarrow \text{Kind})\), and other perspectives. When we mention `Tom' (name) in `cartoon' (character), 
we can determine the entity with {high confidence}. 
This means, the entity \(e_1\) has the dominant probability in the subspace determined by (name = Tom, character = cartoon character). 
While if we locate the entity with `cat' and `cartoon', these two attributes may not enough. 
Commutativity (\ref{commutative}) and associativity (\ref{associativity}) ensure that from different perspective to locate an entity is the same.

\subsection{Categorizing Entropy}

In traditional communication, we calculate entropy of information source \(\mathcal{B}\) with the probabilities of elements \(p(b_i)\) as: 
\begin{equation}
    H_c(\mathcal{B})=\sum_{b_i\in \mathcal{B}} -p(b_i)\log_2 p(b_i).
\end{equation}

For a given semantic probability space \((\Omega^{(M)},E,P)\) and a perspective (\(\mathcal{C}_1\rightarrow\mathcal{C}_2\rightarrow\cdots\rightarrow\mathcal{C}_M\)), 
the self-information of an entity \(e_i^{(M)}=(a_1,\ldots,a_M)\) can be expressed as: \(I(e_i^{(M)})=\sum_{j=1}^{M} I(a_j|a_1,\ldots,a_{j-1})\),
where each term \(I(a_j|a_1,\ldots,a_{j-1})\) follows Shannon’s definition of self-information, reflecting the degree of surprise (information content) contributed by \(a_j\) given the preceding attributes. 
Then, the categorizing entropy of entities \(E\) can be calculated as the sum of the entropy of semantic partitions:
\begin{equation}
\begin{aligned}
     H_s({E})=& \sum_{a_1^{j}\in A_1}-p(a_1^j)\log_2p(a_1^j) \\
    +&\sum_{a_1^j\in A_1}\left[ p(a_1^j) \sum_{a_2^j\in A_2}-p(a_2^j|a_1^j)\log_2p(a_2^j|a_1^j) \right] \\
    +&\cdots\\
    +&\sum_{a_1^j\in A_1} \cdot\cdot\cdot \left. \sum_{a_{N-1}^j\in A_{M-1}} \right[ p(a_{M-1}^j|a_1^j...a_{M-2}^j) \\
    & \left. \sum_{a_M^j\in A_M}-  p(a_M^j|a_1^j...a_{M-1}^j)\log_2p(a_M^j|a_1^j...a_{M-1}^j) \right],
\end{aligned}
\end{equation}
where \(
p(a_{2}^{j}|a_{1}^{j}) =
\frac{
    \textstyle \sum_{b_i \!\in\! B_{1}^{a_{1}^{j}} \cap B_{2}^{a_{2}^{j}}}  p(b_i)
}{
    \textstyle \sum_{b_i \!\in\! B_{1}^{a_{1}^{j}}}  p(b_i)
}
\)
denotes the probability of subset relevant to \(a_2^j\) in subspace determined by \(a_1^j\). 

Consider a semantic probability space \(\Omega^{(M)}\) of information source \(\mathcal{B}\), 
each element \(b_i \in \mathcal{B}\) uniquely corresponds to an entity in the semantic probability space, and each entity corresponds to at most one element: 
\(\forall b_i\in \mathcal{B}\), \(\exists e_{i'}\in E\), let \(b_i\in e_{i'}\), and \(\forall j\neq i\), we have \(b_j\notin e_{i'}\). 
Thus, the probabilities of entities can be: 
\begin{equation}\label{detailed_p}
    \begin{aligned}
    p(e_{i'})=
    \left\{
    \begin{array}{ll}
      p(b_i), &\mathrm{if}\quad \exists  b_i\in\mathcal{B}, b_i\in e_{i'};\\
      = 0, &\mathrm{if}\quad \forall b_i\in\mathcal{B}, b_i\notin e_{i'}.
    \end{array}
    \right.
    \end{aligned}
\end{equation}
In such a semantic probability space, we have Proposition 1, i.e., the categorizing entropy of the entities is the same as the entropy of the source. 

\begin{proposition}
    For a given discrete information source \(\mathcal{B}\) and a semantic probability space with the set of entities \(E\), if each element \(b_i\in\mathcal{B}\) uniquely corresponds to an entity \(e_i\in E\), 
    the categorizing entropy of entities is equal to classical entropy of the source, \textit{i.e.}, 
    \begin{equation}
        H_s(E)=H_c(\mathcal{B}).
    \end{equation}
\end{proposition}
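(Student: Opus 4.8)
The plan is to recognize the categorizing entropy $H_s(E)$ as exactly the chain-rule expansion of a joint entropy, collapse it, and then match it term-by-term to $H_c(\mathcal{B})$ through the correspondence (\ref{detailed_p}). First I would rewrite each of the $M$ summands of $H_s(E)$ in the compact form of a conditional entropy. Writing $A_k$ (overloading the attribute-set symbol) for the random attribute drawn from category $\mathcal{C}_k$, the first line is $H(A_1)=\sum_{a_1^j\in A_1}-p(a_1^j)\log_2 p(a_1^j)$, and the $k$-th line, once the inner sum is weighted by the marginal $p(a_1^j\cdots a_{k-1}^j)$, is precisely the conditional entropy $H(A_k\mid A_1,\ldots,A_{k-1})$. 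Hence $H_s(E)=\sum_{k=1}^{M} H(A_k\mid A_1,\ldots,A_{k-1})$.

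Next I would invoke the chain rule for entropy, which telescopes this sum into the joint entropy of all $M$ attribute variables, $H_s(E)=H(A_1,A_2,\ldots,A_M)$. Because a full-dimensional entity $e_{i'}^{(M)}=(a_1^{i_1},\ldots,a_M^{i_M})$ is by Definition \ref{def_entity} just a tuple consisting of one attribute from each category, the joint outcome $(A_1=a_1^{i_1},\ldots,A_M=a_M^{i_M})$ is in one-to-one correspondence with the entity $e_{i'}$, so $p(a_1^{i_1},\ldots,a_M^{i_M})=p(e_{i'})$. Therefore $H(A_1,\ldots,A_M)=\sum_{e_{i'}\in E}-p(e_{i'})\log_2 p(e_{i'})$; that is, the joint entropy is simply the Shannon entropy of the set of entities.

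It then remains to replace entity probabilities by source probabilities. By hypothesis each $b_i\in\mathcal{B}$ corresponds to a unique entity and each entity to at most one element, so (\ref{detailed_p}) gives $p(e_{i'})=p(b_i)$ for a matched pair and $p(e_{i'})=0$ otherwise. Splitting the entity sum into matched and unmatched entities, the unmatched terms vanish under the convention $0\log_2 0=0$, while the matched terms reproduce $\sum_{b_i\in\mathcal{B}}-p(b_i)\log_2 p(b_i)=H_c(\mathcal{B})$. Chaining the three displays yields $H_s(E)=H_c(\mathcal{B})$.

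The main obstacle I anticipate is not the chain rule itself but justifying that the conditional probabilities appearing in $H_s(E)$, defined set-theoretically through intersections such as $p(a_2^j\mid a_1^j)=\frac{\sum_{b_i\in B_1^{a_1^j}\cap B_2^{a_2^j}}p(b_i)}{\sum_{b_i\in B_1^{a_1^j}}p(b_i)}$, are genuinely the conditionals of a single consistent joint law on $E$ whose marginal on $\mathcal{B}$ returns $p(b_i)$. I would verify this by checking that the nested partition structure, namely exclusivity (\ref{excclusive}) and completeness (\ref{cup_all}) of each category together with the bijection, forces the telescoping product $p(a_1^j)\,p(a_2^j\mid a_1^j)\cdots p(a_M^j\mid a_1^j\cdots a_{M-1}^j)$ to collapse to $p(e_{i'})$ exactly, the successive denominators cancelling across levels. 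Perspective-independence of this construction is then secured by the commutativity (\ref{commutative}) and associativity (\ref{associativity}) proved earlier. Once this consistency is established, the chain rule applies verbatim and the claimed equality follows.
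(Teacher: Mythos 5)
Your proposal is correct and takes essentially the same route as the paper: the paper's proof is exactly the chain-rule telescoping you describe, carried out by explicit algebra in the 2-D case (substituting $\sum_i z_j^i=1$ and combining logarithms to get $-\sum_{j,i} y_j z_j^i \log_2 y_j z_j^i$) and then asserted to extend to $M$ dimensions, followed by the same matching $P_e\setminus\{0\}=P_b$ that you use. Your version is marginally tighter in that you invoke the chain rule for general $M$ directly and explicitly flag the consistency of the set-theoretic conditionals as the point needing justification, which the paper leaves implicit.
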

\begin{proof}

Consider the information source with probabilities \(P_b=\{x_1,x_2,\ldots,x_i,\ldots,x_N\}\), 
\begin{equation}
    \sum_{i=1}^{N} x_i=1. 
\end{equation}
Step 1. Consider a 2-D semantic probability space, with the number of attributes \(\mathrm{size}(A_1)=M_1\), \(\mathrm{size}(A_2)=M_2\), \(M_1\times M_2 \geq N\).
Let \(y_j\) denote the probability of attribute \(a_1^j\in A_1\), 
\begin{equation}
   P(a_1^j)=y_j;\quad \sum_{j=1}^{M_1} y_j=1. 
\end{equation}
Let \(z_j^{i}\) denote the probability of attribute \(a_2^i\in A_2\) in subspace determined by \(a_1^j\): \(P(a_2^i|a_1^j)=z_j^i;\) and from (\ref{sub-space}) we have: 
\begin{equation}\label{sum_xj=1}
    \sum_{i=1}^{M_2}z_j^{i}=1,\forall j \in \mathcal{M}_1.
\end{equation}
The categorizing entropy is: 
\begin{equation} \label{s_entropy_level1}
    H_s(E)=-(\sum_{j=1}^{M_1} y_j\log_2 y_j + \sum_{j=1}^{M_1} y_j \sum_{i=1}^{M_2} z_j^{i} \log_2 z_j^{i}).
\end{equation}

Substitute (\ref{sum_xj=1}) into equation (\ref{s_entropy_level1}), 
    \begin{equation}
    \begin{split}
        H_s(E)&=-(\sum_{j=1}^{M_1} y_j\sum_{i=1}^{{M_2}}z_j^{i} \log_2 y_j + \sum_{j=1}^{M_1} y_j \sum_{i=1}^{M_2} z_j^{i} \log_2 z_j^{i})\\
        &=-(\sum_{j=1}^{M_1} \sum_{i=1}^{M_2} y_j z_j^{i} \log_2 y_j + \sum_{j=1}^{M_1}  \sum_{i=1}^{M_2} y_j z_j^{i} \log_2 z_j^{i})\\
        &=-\sum_{j=1}^{M_1} \sum_{i=1}^{M_2}y_j z_j^{i} \log_2 y_j z_j^{i}.
    \end{split}
    \end{equation}
   The set of probabilities of entities is: 
   \begin{equation}
       {P}_{e}^{(2)}=\{\ldots,y_j z_j^{i},\ldots\},\forall j\in \mathcal{M}_1,i\in\mathcal{M}_2. 
   \end{equation}
   From (\ref{detailed_p}) we have 
   \begin{equation}
       P_e^{(2)}\setminus \{0\}={P}_{b}.
   \end{equation}
   Therefore, for this 2-D semantic probability space, we have
    \begin{equation}
         H_s(E)=H_c(\mathcal{B}).
    \end{equation} 
    
Step 2. Then, extending the same derivations to $M$-D semantic probability space: 
    \begin{equation}
       {P}_{e}^{(M)}=\{\ldots,y_j z_j^{i} w_{ji}^k\cdots,\ldots\},\forall j\in \mathcal{M}_1,\ldots. 
   \end{equation}
      \begin{equation}
       P_e^{(M)}\setminus \{0\}={P}_{b}.
   \end{equation}
   Thus, 
    \begin{equation}
         H_s(E)=H_c(\mathcal{B}).
    \end{equation}
\end{proof}

Proposition 1 theoretically prove the intuitive expectation, i.e., partitioning the elements of a source into entities in a semantic probability space of any dimension, with a one-to-one correspondence, does not change the entropy. 
This proposition verifies the validity of the formulated semantic probability space. In the following, we will demonstrate how this formulated space enables the mathematical modeling of \ac{kb} integration in \ac{semcom} and facilitates quantifying the performance gain obtained from \ac{kb}.

\section{Entropy Reduction at Entity Level} 
In this section, we discuss how KB can help SemCom at the individual entity level based on the mathematical framework established in Section \uppercase\expandafter{\romannumeral2}. 
Basically, there would be two ways to reduce the entropy by utilizing \ac{kb}. 
The first way is to combine synonyms based on \ac{kb} to reduce the entropy of source information with no semantic ambiguity. 
The second way is to exploit semantic dependencies between different entities and combine entities with similar semantic meanings under a tolerable ambiguity requirement. 

For better clarity and understanding, we first provide a formal definition of \ac{kb}. 
\begin{definition}
    For a set of entities \(E\), the \ac{kb} is the conditional probability distribution of \(k_\mathrm{th}\) entity, 
    given \(l\)-length entity sequence \((e^1,\ldots,e^{l})\),\footnote{In this work, \(e^k\) denotes the \(k_{th}\) entity in an entity sequence.} \(\forall k,l\leq K\): 
    \begin{equation}\label{40}
       \mathit{KB}: P(e^{k}=e_i|e^1,\ldots,e^{l}), \forall l,k \leq K,\forall e_i\in E,
    \end{equation}
    where \(K\) is the depth of the \ac{kb}.
\end{definition}

We consider two special cases: \(k\leq l\) and \(k=l+1\). 
When \(k\leq l\), it means that there is a given \(e^k\) in the \(l\)-length entity sequence. 
In this case, the KB can be used to identify substitutes. For an entity in sequence \(e^k=e_j\), 
if there is another entity \(e_i,i\ne j\) have \(P(e^k=e_i|\ldots,e^k=e_j,\ldots, e^l)=1\), then \(e_i\) can be a substitute of \(e_j\). 
When \(k=l+1\), \ac{kb} is the conditional probability of the next entity given the preceding \(k-1\) entities.

\subsection{Combine Synonyms without Semantic Ambiguity}
Assume there is a semantic probability space with the set of entities \(E\), 
each entity \(e_i\in E\) in this semantic probability space has independent semantic meaning:
\begin{equation}
    d_{\Omega}(e_i,e_j)\neq 0, \forall i \neq j.
\end{equation}
However, given a \ac{kb} of these entities, some of the entities may share the same semantic meaning: 
\begin{equation}\label{same_sem}
    (e_i|\mathit{KB})\Leftrightarrow(e_j|\mathit{KB}), i\neq j.
\end{equation}
That is, according to Definition 4 of KB, we can identify synonymous entities that satisfy: 
\begin{equation}
    P(e^k=e_i|e^k=e_j)=P(e^k=e_j|e^k=e_i)=1.
\end{equation}

\begin{figure}[htbp]
    \centering
    \includegraphics[scale=0.5]{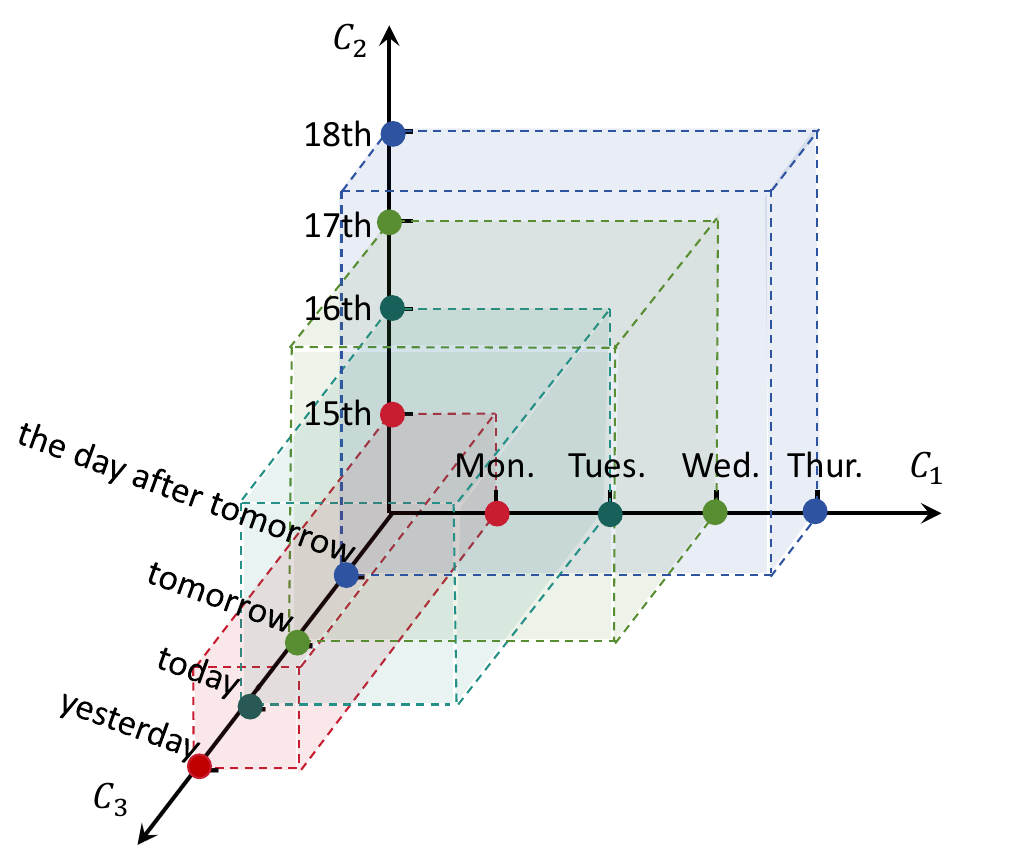}
    \caption{Combine synonyms without semantic ambiguity}
    \label{exam2}
\end{figure}

For example, in semantic probability space of date shown in Fig. \ref{exam2}, `tomorrow', `this Tuesday' and `16th' might have the same semantic meaning. 
In this case, these date entities can be combined based on calendar, as a KB. 

{Therefore, based on the relevant KB, these entities satisfying (\ref{same_sem}) can be combined to} \(e_c^{(j)}\in E_c\) without semantic ambiguity. 
From (\ref{same_sem}) we find that the entities combined to \(e_c^{(j)}\) cannot have two distinct non-zero attributes in any semantic category, otherwise there will be a semantic ambiguity. 

Let us compare the entropy of \(E\) and \(E_c\). 
In this content, the entropy before synonymous combination is calculate as:
\begin{equation}\label{H_c(E')}
        H(E)=\sum_{e_i\in E} -p(e_i)\log_2 p(e_i). 
\end{equation}
The entropy after synonymous combination is calculated as:
\begin{equation}\label{H_s(E_c)}
    \begin{aligned}
        &H(E_c)=\sum_{e_c^{(j)}\in E_c} -p(e_c^{(j)})\log_2p(e_c^{(j)})\\
        &=\sum_{e_c^{(j)}\in E_c} \left[-\sum_{e_i\in e_c^{(j)}} p(e_i)\log_2 \sum_{e_i\in e_c^{(j)}} p(e_i)\right].
    \end{aligned}
\end{equation}

\begin{proposition}
For the set of entities \(E\) in a semantic probability space, given a relevant \ac{kb}, the entities with same semantic meaning can be combined, 
and let \(E_c\) represents the set of entities after the combination. 
Then, the entropy of \(E_c\) is less than the entropy of \(E\):
    \begin{equation}
        H(E_c)\leq H(E).
    \end{equation}
\end{proposition}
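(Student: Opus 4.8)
The plan is to reduce the global comparison $H(E_c)\le H(E)$ to a per-class inequality and then close the argument with the monotonicity of the logarithm. First I would observe that the synonymy relation in~(\ref{same_sem}), characterized by $P(e^k=e_i|e^k=e_j)=P(e^k=e_j|e^k=e_i)=1$, is an equivalence relation on $E$: reflexivity is trivial, symmetry is explicit in the bidirectional certainty condition, and transitivity follows by chaining the conditionals. Consequently, the combination partitions $E$ into disjoint synonym classes, each collapsing to exactly one combined entity $e_c^{(j)}\in E_c$ with $p(e_c^{(j)})=\sum_{e_i\in e_c^{(j)}}p(e_i)$, as already recorded in~(\ref{H_s(E_c)}). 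This partition structure is what allows me to write both $H(E)$ in~(\ref{H_c(E')}) and $H(E_c)$ in~(\ref{H_s(E_c)}) as sums indexed by the same combined classes.

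Next I would isolate the contribution of a single class. Fixing a class $e_c^{(j)}$ that collects entities with probabilities $p_1,\ldots,p_n$, I write $P=\sum_{k=1}^n p_k=p(e_c^{(j)})$. The contribution of this class to $H(E)$ is $\sum_{k=1}^n -p_k\log_2 p_k$, whereas its contribution to $H(E_c)$ is the single term $-P\log_2 P$. The statement therefore reduces to the local inequality $-P\log_2 P \le \sum_{k=1}^n -p_k\log_2 p_k$ holding for every class.

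The core step is then a one-line estimate. Since each $p_k>0$ satisfies $p_k\le P$ and $\log_2$ is increasing, we have $-\log_2 p_k \ge -\log_2 P$; multiplying by $p_k\ge 0$ and summing over $k$ gives $\sum_{k=1}^n -p_k\log_2 p_k \ge -\left(\sum_{k=1}^n p_k\right)\log_2 P = -P\log_2 P$, which is exactly the local inequality. Summing this over all combined classes $e_c^{(j)}\in E_c$ then yields $H(E_c)\le H(E)$, with equality precisely when every class is a singleton, i.e.\ when no genuine synonyms exist.

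I do not anticipate a hard analytic obstacle, since the entropy decrease under merging is a standard consequence of the grouping behavior of $-x\log_2 x$. The one point demanding care is the combinatorial bookkeeping: verifying that synonymy induces a genuine partition of $E$ so that both entropies decompose over the same index set, and confirming that entities left uncombined are treated uniformly as singleton classes (for which the local inequality degenerates to an equality and contributes nothing to any strict decrease). Once the partition is established, the probability-weighted monotonicity of the logarithm closes the argument immediately.
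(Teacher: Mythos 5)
Your proof is correct and follows essentially the same route as the paper's: both arguments reduce the comparison to the observation that $p(e_i)\leq \sum_{e_i\in e_c^{(j)}}p(e_i)$ and invoke monotonicity of $\log_2$ term by term, your class-by-class grouping being just a reorganization of the paper's entity-by-entity comparison. Your explicit check that synonymy is an equivalence relation (so the combination genuinely partitions $E$) is a small piece of bookkeeping the paper leaves implicit, but it does not change the substance of the argument.
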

\begin{proof}
    The entropy of combined entities \(E_c\) can be rewrite as: 
    \begin{equation}
    \begin{aligned}
        H(E_c)=-\sum_{e_i\in E} p(e_i)\log_2 \sum_{e_i\in e_c^{(j)}} p(e_i).
    \end{aligned}
\end{equation}
Step 1. When \(p(e_i)=0\), the corresponding sub-terms in the two equations (\ref{H_s(E_c)}) and (\ref{H_c(E')}) are the same: 
\begin{equation}
\begin{aligned}
        p(e_i)\log_2 \sum_{e_i\in e_c^{(j)}} p(e_i)=   p(e_i)\log_2 p(e_i) =0.
\end{aligned}
\end{equation}
Step 2. When \(p(e_i)>0\), we have:
\begin{equation}
    0< p(e_i) \leq \sum_{e_i\in e_c^{(j)}} p(e_i) \leq 1,
\end{equation}
thus 
\begin{equation}
     -p(e_i)\log_2 \sum_{e_i\in e_c^{(j)}} p(e_i) \leq   -p(e_i)\log_2 p(e_i).
\end{equation}
Therefore,
\begin{equation}
    H(E_c)\leq H(E).
\end{equation}
\end{proof}
P2 indicates that, based on the \ac{kb}, the source can be compressed without loss of semantic meaning.

\subsection{Scaling Semantic Attributes with Ambiguity}
Given a semantic probability space \(\left( \Omega, E,P \right)\), we can construct a new one via scaling semantic attributes. 
For simplicity, we only discuss the 1-D semantic probability space here, and the extension to higher dimensions can be derived analogously. 

For an attribute  \(a_j^k\) of \(C_j\) and its \(\epsilon_{a_j^k}\)-similar attributes, given the \ac{kb} between these attributes and an attribute \(a_j^{k'}\) of \(C_j'\): 
\begin{equation}
    P(a_j^{k'}|a_j^i)=1, \forall a_j^i\in U({a}_j^{k},\epsilon_{a_j^k}),
\end{equation}
where 
\(\epsilon_{a_j^k}\) is an acceptable similarity threshold of \(a_j^k\) based on the definition in (\ref{similar}), 
\(U({a}_j^{k},\epsilon_{a_j^k})=\{a_j^i||f_{C_j}(a_j^i)-f_{C_j}(a_j^{k})|\leq \epsilon_{a_j^k}\}\). 
Then, based on the relevant KB, the attributes of \(a_j^i\in U({a}_j^{k},\epsilon_{a_j^k})\) can be scaled to \(a_j^{k'}\): 
\begin{equation} \label{scaling}
    ({a}_j^{k'}|\Omega_{C_j'})=\left( U({a}_j^{k},\epsilon_{a_j^k})|\Omega_{C_j}\right),
\end{equation}
where \(\Omega_{C_j'}\) is a new 1-D semantic probability space obtained from scaling \(\Omega_{C_j}\). 
For example, in color category, as shown in Fig. \ref{color_scal}, there are `pink, scarlet, dark red, ruby, etc.', however, in certain contexts, the general term `red' may be sufficient. 

\begin{figure}[t]
    \centering
    \includegraphics[scale=0.5]{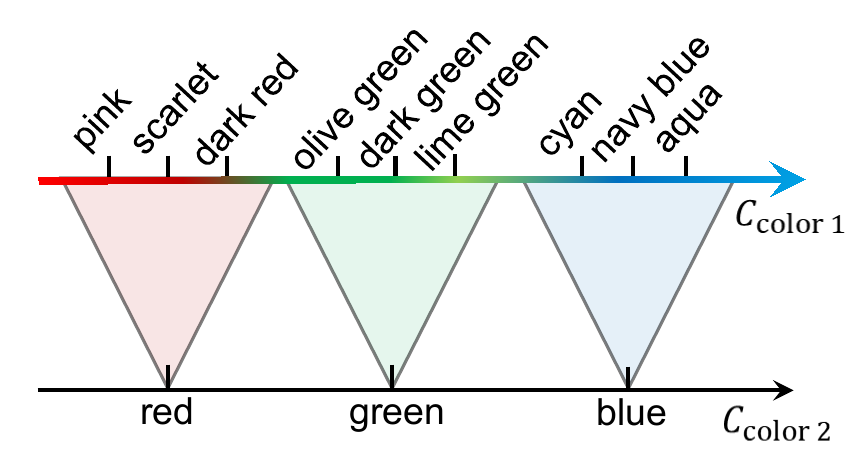}
    \caption{Scaling semantic attributes with ambiguity}
    \label{color_scal}
\end{figure}
Then, based on (\ref{similar}) and (\ref{scaling}), we give the semantic ambiguity of an attribute \(a_j^{k'}\) in category \(C_j'\) compared with \(C_j\):  
\begin{equation} \label{def_ambiguity}
    \delta_{a_j^{k’}}=\frac{\epsilon_{a_j^k}}{\textrm{len}(C_j)}, 
\end{equation}
where \(\textrm{len}(C_j)=\max_{a_j^k,a_j^l\in C_j} |f_{C_j}(a_j^k)-f_{C_j}(a_j^l)| \). 
Ambiguity here is a measurement for a semantic category, to reflect how ambiguous the semantic probability space is compared with another.

Then, we re-calculate the probability for attributes after combined in the new semantic probability space 
\begin{equation}
\begin{aligned}
    P({a}_j^{k'}|{C_j'}) &=\sum_{{a}_j^{i} \in U({a}_j^{k},\epsilon_{a_j^k})}  P({a}_j^{i}|{C_j}) \\
    &=P \left( U({a}_j^{k},\epsilon_{a_j^k})|{C_j}\right). 
\end{aligned}
\end{equation}
By scaling semantic probability space, the {entropy of attributes can} be re-calculated as: 
\begin{equation}\label{entr_ambiguity}
\begin{aligned}
    &H({C_j'})=\sum_{{a}_j^{k'}\in {C_j'}} -p({a}_j^{k'}|{C_j'})\log_2 p({a}_j^{k'}|{C_j'})\\
    &=\sum_{{a}_j^{k'}\in \Omega_{C_j'}} \left[-\sum_{{a}_j^{i}\in U_{{a}_j^{k}}} p({a}_j^{i}|{C_j} ) \log_2 \sum_{{a}_j^{i}\in U_{{a}_j^{k}}} p({a}_j^{i} |{C_j}) \right]\\
    &\leq \sum_{{a}_j^{k}\in {C_j}} -p({a}_j^{k}|{C_j})\log_2 p({a}_j^{k}|{C_j})= H({C_j}).
\end{aligned}
\end{equation}
Therefore, as expected, accepting a certain degree of semantic ambiguity can further reduce the entropy of a source information.

\section{Entropy Reduction at Message Level}
Based on the KB definition in (\ref{40}), Section III discusses how KB can reduce the entropy of source information at entity level with the condition that \(k \leq l\). In this section, we study entropy reduction at message level based on KB with the condition of \(k > l\). 






\subsection{Message Entropy}
Consider a message \(s_i\in\mathcal{S}\) consisting with \(K\) entities, i.e., \(s_i=(e^{1},e^{2},\ldots,e^{K})\).
For traditional communication with no entity relationship information (i.e., conditional probability) obtained from KB, the classical entropy of \(\mathcal{S}\) is calculated as: 
\begin{equation}
\begin{aligned}
    H_c(\mathcal{S})&=\sum_{k=1}^K H(E_k)
    \\&=\sum_{k=1}^K \sum_{e^{k}\in E_k} -p(e^{k})\log_2 p(e^{k}), 
\end{aligned}
\end{equation}
where \(E_k\) denotes the set of \(k_{th}\) entities.

For a given \(K\) depth \ac{kb}, the entropy of \(\mathcal{S}\) can be calculated with conditional probability of \(e^k\) given the preceding \(k-1\) entities. 
The express of semantic entropy of a message is 
\begin{equation}
\begin{aligned}
     H_s(\mathcal{S})=& \sum_{e^{1}\in E_1} -p(e^{1})\log_2 p(e^{2}) \\
    +&\sum_{e^{1}\in E_1}\left[ p(e^{1}) \sum_{e^{2}\in E_2} -p(e^{2}|e^{1})\log_2 p(e^{2}|e^{1}) \right] \\
    +&\cdots\\
    +&\sum_{e^{1}\in E_1} \cdot\cdot\cdot \left. \sum_{e^{{K-1}}\in E_{K-1}} \right[ p(e^{{K-1}}|e^{1}\cdots e^{{K-2}}) \\
    & \left. \sum_{e^{K}\in E_K}-  p(e^{K}|e^{1}\cdots e^{{K-1}})\log_2 p(e^{K}|e^{1}\cdots e^{{K-1}}) \right].   
\end{aligned}
\end{equation}

\begin{proposition}
   For a set of \(K\)-length messages \(\mathcal{S}\), the entropy of messages with \ac{kb}\(: P(e_{k}|e_1,\ldots,e_{k-1}), \forall k \leq K,\) is less than classical entropy, i.e., 
\begin{equation}
    H_s(\mathcal{S})\leq H_c(\mathcal{S}).
\end{equation}
\end{proposition}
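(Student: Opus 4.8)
The plan is to recognize that the two quantities being compared are, respectively, the joint entropy of the whole $K$-entity message and the sum of the marginal entropies of its individual positions, so that the claim reduces to the subadditivity of entropy. The engine behind the inequality is the elementary fact that conditioning never increases entropy.

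First I would observe that the displayed expression for $H_s(\mathcal{S})$ is precisely the chain-rule decomposition of the joint entropy $H(E_1,\ldots,E_K)$: its first summand is $H(E_1)$, its second is the conditional entropy $H(E_2\mid E_1)=\sum_{e^1}p(e^1)\sum_{e^2}-p(e^2\mid e^1)\log_2 p(e^2\mid e^1)$, and in general its $k$-th summand is $H(E_k\mid E_1,\ldots,E_{k-1})$. Hence $H_s(\mathcal{S})=\sum_{k=1}^{K}H(E_k\mid E_1,\ldots,E_{k-1})$, whereas by definition $H_c(\mathcal{S})=\sum_{k=1}^{K}H(E_k)$ is the sum of the unconditioned marginals. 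Since the $k=1$ terms coincide, the difference is
\[ H_c(\mathcal{S}) - H_s(\mathcal{S}) = \sum_{k=2}^{K}\bigl[\,H(E_k) - H(E_k\mid E_1,\ldots,E_{k-1})\,\bigr]. \]

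The task therefore reduces to the single term-by-term inequality $H(E_k\mid E_1,\ldots,E_{k-1})\le H(E_k)$ for each $k$; summing it over $k$ yields $H_s(\mathcal{S})\le H_c(\mathcal{S})$ directly. To prove this inequality I would write each gap as the mutual information between $E_k$ and the preceding block and exhibit it as a nonnegative Kullback--Leibler divergence,
\[ H(E_k) - H(E_k\mid E_1,\ldots,E_{k-1}) = \sum p(e^k,e^1,\ldots,e^{k-1})\,\log_2\frac{p(e^k\mid e^1,\ldots,e^{k-1})}{p(e^k)} = I(E_k;E_1,\ldots,E_{k-1}) \ge 0, \]
which is consistent with the KL-divergence viewpoint already adopted elsewhere in the paper for measuring KB information gain.

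The main obstacle is justifying the nonnegativity of this divergence, as everything else is bookkeeping. The cleanest self-contained route is a one-line application of Jensen's inequality to the concave function $\log_2$ (equivalently, the log-sum inequality or Gibbs' inequality), which gives $\sum p\log_2(q/p)\le \log_2\sum q = 0$ and hence the required sign. I would also note that equality holds exactly when each $E_k$ is independent of its predecessors, i.e.\ when the \ac{kb} carries no predictive information; any genuine semantic dependency captured by the \ac{kb} makes at least one mutual-information term strictly positive and thus strictly reduces the message entropy, which is precisely the performance gain the proposition is meant to quantify.
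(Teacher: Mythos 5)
Your proposal is correct and follows essentially the same route as the paper: both note that the first terms coincide and then compare the remaining terms one by one, reducing each gap to the nonnegativity of a Kullback--Leibler-type sum (i.e.\ conditioning does not increase entropy). The only cosmetic difference is that you certify Gibbs' inequality via Jensen's inequality, whereas the paper uses the equivalent elementary bound $\ln x \leq x-1$.
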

\begin{proof}
Step 1. The first item of \(H_s(\mathcal{S})\) denoted by \(H_s^1(\mathcal{S})\) is the same as the first item of \(H_c(\mathcal{S})\) denoted by \(H_c^1(\mathcal{S})\): 
\begin{equation}
\begin{aligned}
    H_c^1(\mathcal{S})&= H(E_1)
    \\&=\sum_{e^{1}\in E_1} -p(e^{1})\log_2 p(e^{1}) = H_s^1(\mathcal{S}).
\end{aligned}
\end{equation}

Step 2. \(\forall i>1\) , the \(i_{\mathrm{th}}\) item of \(H_s(\mathcal{S})\) and \(H_c(\mathcal{S})\) can be calculated as :
    \begin{equation}
    \begin{aligned}
               H_s^i(\mathcal{S})&=-\sum_{e^1\in E_1} \cdot\cdot\cdot \left. \sum_{e^{i-1}\in E_{i-1}} \right[ p(e^{i-1}|e^1...e^{i-2}) \\
               &\left. \sum_{e^i\in E_i}  p(e^i|e^1...e^{i-1})\log_2 p(e^i|e^1...e^{i-1}) \right].   
    \end{aligned}
    \end{equation}
    \begin{equation}
    \begin{aligned}
               H_c^i(\mathcal{S})&=\sum_{e^i\in E_i} -p(e^i)\log_2 p(e^i)    \\
               &=-\sum_{e^i\in E_i} \left[ \cdot\cdot\cdot \sum_{e^{i-1}\in E_{i-1}}   p(e^i|e^1...e^{i-1}) \right] \log_2p(e^i)\\
               &=-\sum_{e^1\in E_1} \left[ \cdot\cdot\cdot \sum_{e^{i}\in E_{i}}   p(e^i|e^1...e^{i-1})  \log_2p(e^i)\right].   
    \end{aligned}
    \end{equation}
    Since
    \begin{equation}
        \ln{x}\leq x-1,\forall x>0.
    \end{equation}
    Then,
    \begin{equation}
    \begin{aligned}
        &H_s^i-H_c^i=\sum_{e^1\in E_1} \left[ \cdot\cdot\cdot \sum_{e^{i}\in E_{i}}   p(e^i|...)  \log_2 \frac{p(e^i)}{p(e^i|...)}\right]\\
        &\leq \sum_{e^1\in E_1} \left[ \cdot\cdot\cdot \sum_{e^{i}\in E_{i}}   p(e^i|...)\left[ \frac{p(e^i)}{p(e^i|...)} -1\right]\log_2 e \right]=0
    \end{aligned}
    \end{equation}
    Therefore,
    \begin{equation}
        H_s(S)\leq H_c(S)
    \end{equation}
\end{proof}

Proposition 3 indicates that \ac{kb} can reduce information entropy at message level by exploiting the dependencies between entities.
Taking an example to better understand, where a message , \(s=\text{`Today is Sunny'}\), then \(e^{1}=\text{`Today'}\) and \(e^{2}=\text{`Sunny'}\). 
If we have the below KB of weather,
\begin{equation}
    P(\text{`Sunny'}|\text{`Today'}) \simeq 1, 
\end{equation}
that is, given \(e^{1}=\text{`Today'}\), the probability distribution of \(E_{2}\) is
\begin{equation}
    P(e^{2}=\text{Rainy, Sunny, Cloudy,...}|e^{1}=\text{Today})\simeq (0,1,0,\cdots).
\end{equation}
If there is no KB of weather, the probability distribution of \(E_{2}\) is independent with the first entity. Assuming the below probabilities for \(E_{2}\),   
\begin{equation}
    P(\text{Rainy, Sunny, Cloudy,}\ldots)\simeq (0.1,0.2,0.2,\ldots),
\end{equation}
we can calculate the entropy of \(E_{2}\) with KB is less than without KB in this example. 

Sections III and IV theoretically validate that a KB can reduce information entropy at both entity and message levels, thereby improving the transmission efficiency in SemCom. Building on these findings, Remark 1 below reveals how \ac{ml} should be leveraged in the design of SemCom system.

\begin{remark}
ML can be leveraged in two key aspects of designing a SemCom system. 
\begin{itemize}
    \item As the depth \(K\) of the \ac{kb} increases, the number of conditional probability distributions contained in the KB grows exponentially. Estimating these conditional probabilities by applying ML facilitates better generalization capabilities in the face of unseen semantic contexts, which significantly reduces the storage and computational burdens associated with KB depth expansion.
    
    \item The code-book size for conditional probability encoding also increases exponentially with depth, storing codewords individually for each depth state can lead to codebook sizes that are difficult to manage and store. Training end-to-end coding models through ML allows optimizing the codebook to significantly reduce storage requirements.
    \end{itemize}
\end{remark}

\subsection{Semantic Channel Capacity}

Mathematically, based on the KB definition in (\ref{40}), let \(\mathbf{P}_{E_{k}}^{s^{k-1}}=\mathbf{P}(E_k|\mathcal{S}^{k-1})\) denotes the conditional probability distribution of \(E_{k}\) given \({(k-1)}\)-length sequence \(\mathcal{S}^{k-1}\), 
and let \(\mathbf{Q}_{E_{k}}\) denotes the marginal probability distribution of \(E_{k}\). 
The mutual information can be calculated as: 
\begin{equation}
\begin{aligned}
    &I_{\mathit{KB}}^E(\mathcal{S}^{k-1},E_k)=D_{\mathit{KL}} \left(\mathbf{P}(\mathcal{S}^{k-1},E_k)||\mathbf{P}(\mathcal{S}^{k-1})\mathbf{P}(E_k) \right)\\
    &=\sum_{s^{k-1}\in \mathcal{S}^{k-1}} \sum_{e_k\in E_k} p(s^{k-1},e_k) \log \frac{p(s^{k-1},e_k)}{p(s^{k-1})p(e_k)}\\
    &=\sum_{s^{k-1}\in \mathcal{S}^{k-1}} \sum_{e_k\in E_k} p(s^{k-1})p(e_k|s^{k-1}) \log \frac{p(e_k|s^{k-1})}{p(e_k)}\\
    &= \sum_{s^{k-1}\in \mathcal{S}^{k-1}} p(s^{k-1}) D_{\mathit{KL}} (\mathbf{P}_{E_{k}}^{s^{k-1}}||\mathbf{Q}_{E_{k}}).
\end{aligned}
\end{equation}
Thus, consider a \(K\)-length message the information gain obtained from \ac{kb} can be measured with divergence\footnote{\ac{kb} can be applied to any messages with the length less than \(K\) in a similar way.} 
\begin{equation}
    I_{\mathit{KB}}^M=\sum_{k=1}^K \sum_{s^{k-1}\in \mathcal{S}^{k-1}} p(s^{k-1}) D_{\mathit{KL}} (\mathbf{P}_{E_{k}}^{s^{k-1}}||\mathbf{Q}_{E_{k}}).
\end{equation}


Similarly, we can also obtain the \ac{kb}-based mutual information between a single entity and its substitutes: \(I_{\mathit{KB}}^E\). 
Then, the mutual information from \ac{kb} is:
\begin{equation}
    I_{\mathit{KB}}=I_{\mathit{KB}}^E+I_{\mathit{KB}}^M.
\end{equation}
Therefore, based on the mutual information, the gain obtained from \ac{kb} can be expressed as:
\begin{equation}
    S_\mathrm{KB}=\frac{H_c(\cdot)}{H_c(\cdot)-I_{\mathit{KB}}},
\end{equation}
where \(H_c(\cdot)\) represents classical source entropy. 
Then, the semantic channel capacity with \ac{kb} can be modeled as: 
\begin{equation}\label{sc_channel}
    C_s=S_\mathrm{KB}\frac{1}{L}B\log_2(1+\gamma),
\end{equation}
where \(B\) is the bandwidth, \(\gamma\) is \ac{snr}, and \(L=H_c(\cdot)/M\) is the attributes information density for a given \(M\)-D semantic probability space. 
The number of bits that can be transmitted is quantified by classical Shannon theory, whereas the amount of semantics that can be derived from these bits could be measured by (\ref{sc_channel}). 
Therefore, we know that the semantic channel capacity in \ac{semcom} is not only related to bandwidth and \ac{snr}, 
but also depends on the source distribution and the size of the \ac{kb}.

\begin{remark}
   The proposed mathematical framework can be utilized in the following key aspects of modeling and optimizing \ac{semcom} systems:
   \begin{itemize}
   \item From Proposition 1, we know that for a source, the categorizing entropy remains unchanged in different-dimensional spaces and under different perspective. Thus, future work could focus on determining the optimal dimension of the space to balance the trade-off between semantics and encoding efficiency, while also exploring the impact of different perspectives for locating entities in the space. 
   \item Based on (\ref{def_ambiguity}) and (\ref{entr_ambiguity}), we can optimize the scaling of the semantic space to balance the trade-off between semantic ambiguity and compression rate.
    \item Proposition 3 indicates that theoretically, the greater the depth of KB, the greater the semantic gain. However, in practice, the depth of the KB cannot be infinitely large. Therefore, we could optimize the KB to balance the trade-off between computational cost and semantics gain.
    \item Based on (\ref{sc_channel}), given the KBs of users, we can reconsider the resource optimization problem, such as bandwidth or power allocation, in wireless communication scenarios at the semantic aspect.
    
\end{itemize}
\end{remark}

\section{Numerical Results and Discussions}
In this section, numerical simulations are conducted to validate the correctness of the established SemCom mathematical framework and evaluate the SemCom system performance under different scenarios. 
Building on our SemCom mathematical model, a semantic coding is proposed, where given a perspective, using Fano coding to encode the attributes of an entity based on spatial location and probabilistic information. 
The two coding methods traditional Fano coding and Fano coding with parity check are compared as benchmarks. 
The simulation codes are built with Python 3.8 and conducted on an Intel Core i7 CPU with 16GB of RAM.

\subsection{Validity of the Proposed Framework}
Fig. \ref{sim_codingeff} demonstrates the number of entities with the coding efficiency, which is the ratio of source entropy to average length of coding an entity. 
As the number of attributes increases, the coding efficiency of all three methods increases and eventually stabilises. 
Traditional Fano is always the most efficient because it optimally allocates code words based only on the probability of occurrence of symbols without any additional redundancy. Fano with parity check is the least efficient among the three due to the additional overhead of additional parity bits for error detection. Semantic coding is slightly less efficient than traditional Fano mainly because it adds some structuring when focusing on the conceptual or attribute level for a higher level of coding mapping.

\begin{figure}[htbp]
    \centering
    \includegraphics[width=0.5\textwidth]{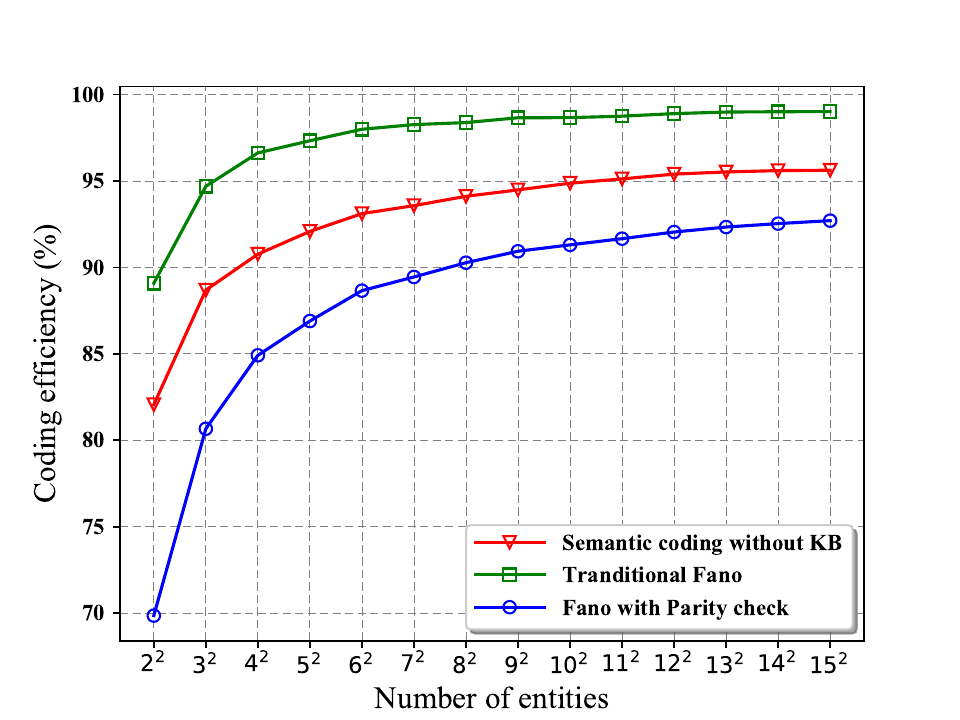}
    \caption{Coding efficiency vs. Number of attributes}
    \label{sim_codingeff}
\end{figure}

We then analyze the trend of sut error rate with varying \ac{snr} as shown in Fig. \ref{sim_ser}. From this figure, we find that the performance of all coding methods improves as the SNR increases, reflecting the positive impact of enhanced signal quality in noisy environments on coding efficiency and reliability. Specifically, traditional Fano coding exhibits higher SER at lower SNRs, and its performance rapidly improves to a superior state as the SNR increases. This is due to the fact that traditional Fano coding mainly relies on symbol occurrence probability for optimization, lacks additional error detection mechanisms, and is susceptible to high noise environments. On the other hand, Fano with parity coding exhibits lower BER at all SNR levels, due to the introduction of the parity bit which enhances the robustness and error detection capability of the system. Semantic coding has optimal basic performance under different SNR conditions. Semantic coding improves robustness under moderate to high SNR conditions by exploiting semantic correlations between attributes, making it better than traditional Fano coding in terms of BER, but not as good as Fano plus parity-check coding.

\begin{figure}[htbp]
    \centering
    \includegraphics[width=0.5\textwidth]{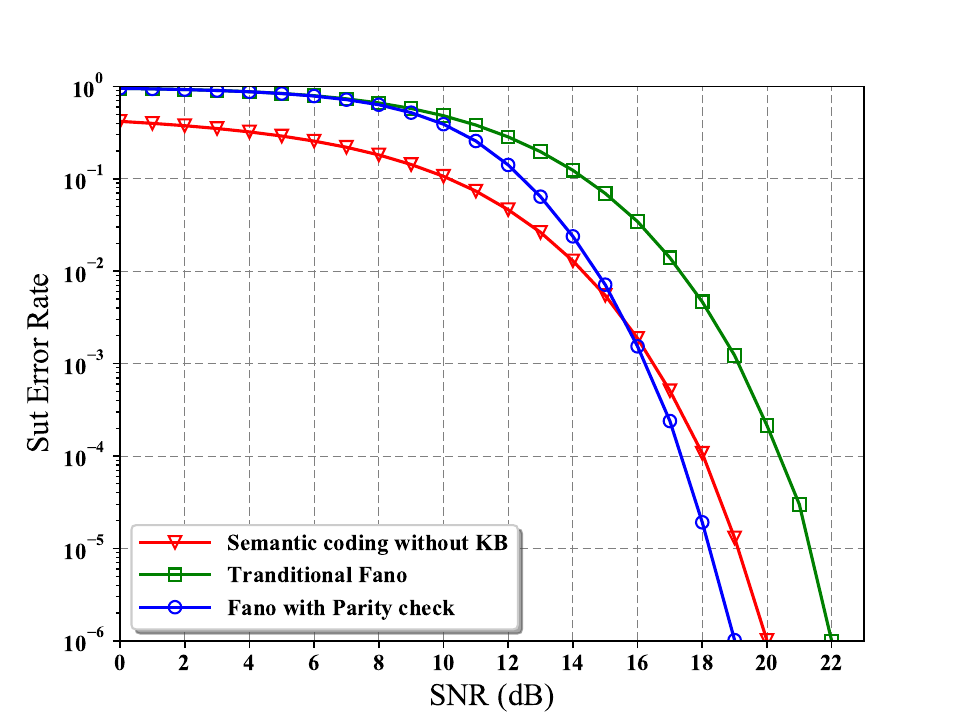}
    \caption{Sut error rate vs. Different SNR}
    \label{sim_ser}
    \end{figure}
    
We next evaluate semantic effiency under different SNR for all the three methods in Fig. \ref{sim_suteff}. 
Semantic efficiency refers to the ratio of the number of correctly transmitted semantic units to the total number of bits consumed. 
The semantic coding efficiency of all coding methods increases as the SNR increases, with semantic coding exhibiting the highest semantic coding efficiency in most SNR ranges, significantly outperforming both traditional Fano coding and Fano with parity-check coding. This suggests that semantic coding is able to make more effective use of semantic correlation and contextual information, thus achieving more efficient information transmission under high SNR conditions. Although traditional Fano coding has advantages in coding efficiency, its semantic coding efficiency is relatively low in semantic transmission due to the lack of deep mining of semantic information. And Fano plus parity check coding enhances the error detection ability by introducing parity check bits, although it improves the reliability to a certain extent, its semantic coding efficiency fails to exceed that of semantic coding due to the increase of redundant information. Therefore, the figure clearly demonstrates the superiority of semantic coding under different SNR conditions, emphasising the need for adopting semantic coding strategies in \ac{semcom} systems to enhance overall transmission efficiency and reliability.

 \begin{figure}[htbp]
    \centering
    \includegraphics[width=0.5\textwidth]{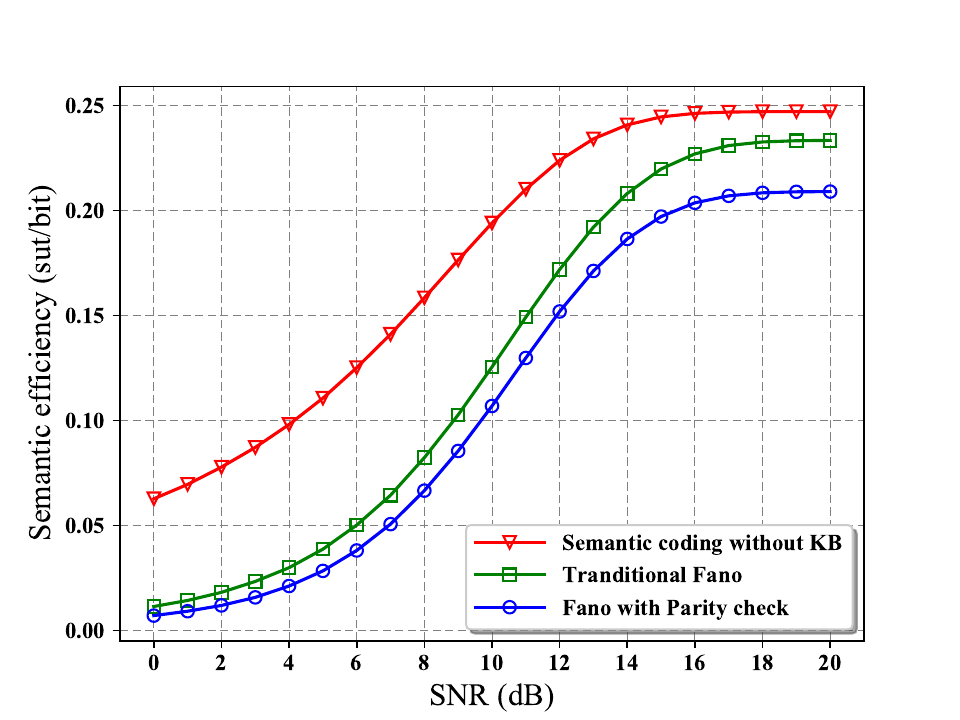}
    \caption{Semantic efficiency vs. Different SNR}
    \label{sim_suteff}
    \end{figure}   

\subsection{Validity of \ac{kb}}    
The simulation results in Fig. \ref{com_s} clearly show the relationship between the average encoding length and the number of symbols in each category. As shown, the semantic coding with KB achieves the shortest average coding length, followed by traditional Fano coding, while semantic coding without KB performs the worst in terms of coding length. The reason is that the KB helps merge semantically similar symbols, reducing redundancy and the number of effective encoding units. Without a KB, semantic coding may introduce extra structure but cannot eliminate overlaps, leading to longer codes. This confirms that incorporating semantic knowledge can significantly improve coding efficiency.    

\begin{figure}[htbp]
    \centering
    \includegraphics[width=0.5\textwidth]{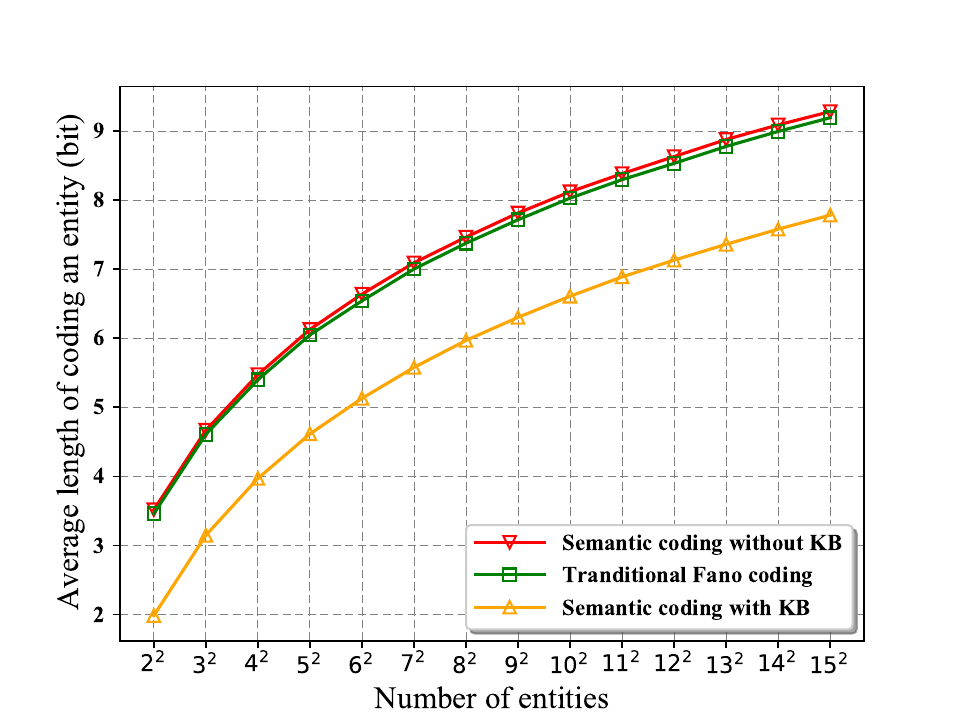}
    \caption{Average length of coding vs. Number of symbols each category}
    \label{com_s}
    \end{figure}

    \begin{figure}[htbp]
    \centering
    \includegraphics[width=0.5\textwidth]{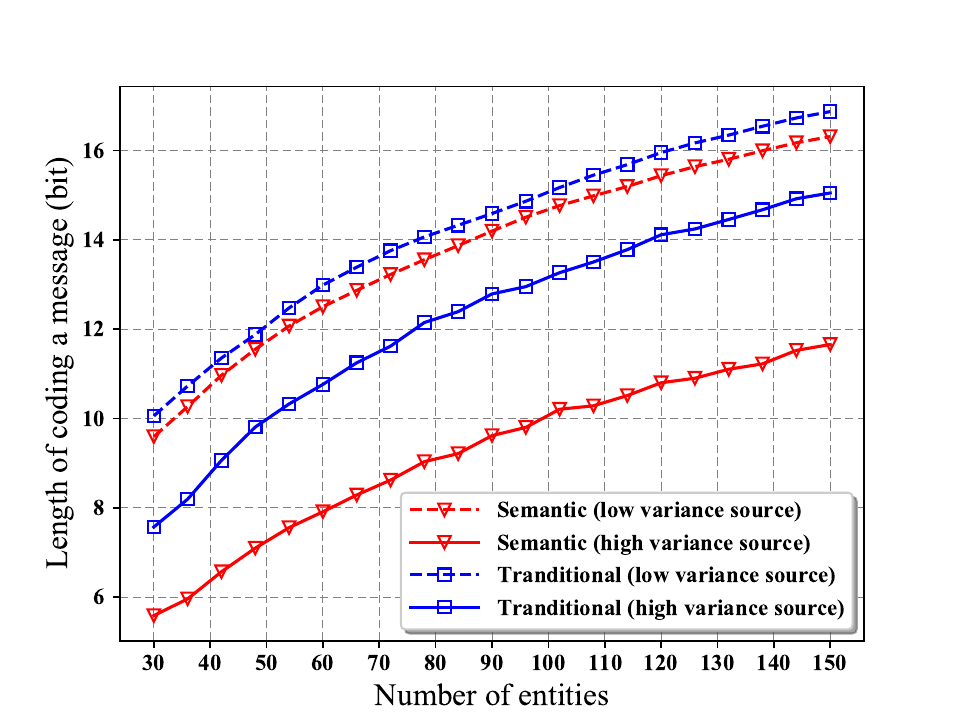}
    \caption{Length of coding vs. Number of entities each dimension}
    \label{sim_lencode}
    \end{figure}
    
Fig. \ref{sim_lencode} shows the trend of coding length with the number of entities per dimension, comparing the performance difference between the two methods with low and high variance source. As can be seen from the figure, the coding length shows an increasing trend as the number of entities per dimension increases. Under low variance source data, semantic coding significantly outperforms traditional coding, as evidenced by lower coding lengths; whereas under high variance source data, the advantage of semantic coding, although weakened, is still better than traditional coding. This shows that the semantic coding approach has a significant advantage in reducing the coding length, especially in the high variance scenario, reflecting its advantage in adapting to the data distribution characteristics.

 \begin{figure}[htbp]
    \centering
    \includegraphics[width=0.5\textwidth]{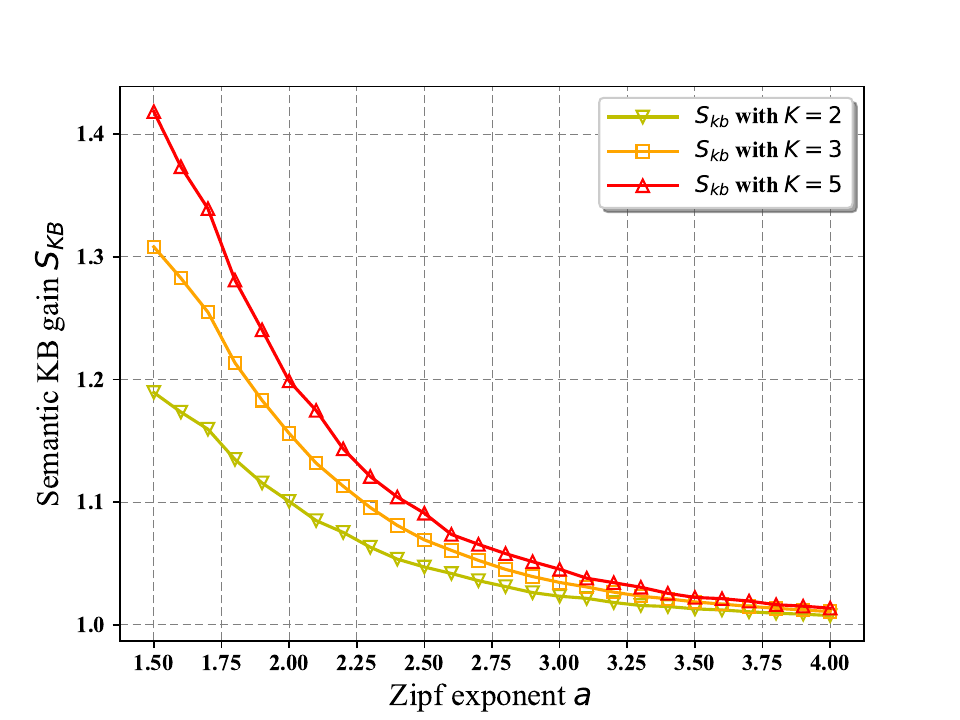}
    \caption{Semantic KB gain vs. Zipf exponent $a$}
    \label{skb}
    \end{figure}
    
Fig. \ref{skb} shows the results of \(S_\mathrm{KB}\) with exponent $a$ under discrete sources generated based on Zipf distribution. By observation we can find that \(S_\mathrm{KB}\) exhibits a significant negative correlation with the Zipf exponent $a$, and this relationship remains consistent for different values of K. As $a$ increases from 1.5 to 4.0, the \(S_\mathrm{KB}\) values of all curves show a monotonically decreasing trend and finally converge to a level close to 1.0. This is due to the basic properties of the Zipf distribution: when $a$ is small, the variance is larger, which means that the conditional probability distributions among the entities are also more diverse and interdependent. In this case, the \ac{kb} is able to provide significant inference enhancement by capturing and utilizing these conditional probability relationships. Conversely, as $a$ increases, the variance decreases and the dependencies between entities diminish. In the extreme case, almost all pairs of entities can be regarded as independent events, at which point the conditional probability information that the \ac{kb} can provide becomes negligible, and thus the KB gain approaches zero.

\section{Conclusions}
In this paper, we established a rigorous mathematical framework for \ac{semcom} leveraging category theory to define semantic entities and spaces. We introduced semantic probability space and semantic entropy to mathematically describe the semantics. We also demonstrated the role of \ac{kb} in reducing semantic entropy, enhancing communication efficiency without losing semantic meaning. Two main methods—combining synonymous entities and scaling semantic probability spaces with controlled ambiguity were analyzed, showing their effectiveness in minimizing information redundancy. Our study highlighted that semantic entropy, supported by KB, is consistently lower than traditional entropy, which significantly improves semantic channel capacity and communication efficiency. Overall, this work provides a foundational theoretical basis for future research in \ac{semcom}, particularly in optimizing semantic coding schemes and developing practical \ac{semcom} systems.

	
	\bibliographystyle{IEEEtran}
	\bibliography{ref}

\end{document}